\tikzstyle{regiao} = [rectangle, rounded corners, minimum width=2cm, minimum height=1cm,text centered, draw=black, fill=black!5]
\newtheorem{theorem}{Theorem}
\newtheorem{proposition}{Proposition}
\newtheorem{lemma}{Lemma}
\newtheorem{corollary}{Corollary}
\newcommand{\Rnaught}{\ensuremath{\mathcal{R}_0}}
\newcommand{\bydef}{:=}
\newcommand{\R}{\mathbb{R}}
\newcommand{\bv}{\mathbf{v}}
\newcommand{\mOp}{{-,0,+}}
\newcommand{\df}{{\mathrm{df}}}
\begin{document}

\title{Migrations, vaccinations and epidemic control%
}

\author{Fabio A. C. C. Chalub\thanks{ Departamento de Matem\'atica and Centro de Matem\'atica e Aplica\c c\~oes, Faculdade de Ci\^encias e Tecnologia, Universidade Nova de Lisboa, Quinta da Torre, 2829-516, Caparica, Portugal.}         \and
        Tiago J. Costa\thanks{ Funda\c c\~ao Champalimaud, Av. Bras\'ilia, 1400-038 Lisboa}				\and
        Paula Patr\'icio${}^\ast,$\thanks{Corresponding author: pcpr@fct.unl.pt} %etc.
}

%${}^{\dagger}$, Tiago J. Costa\thanks{Funda\c c\~ao Champalimaud}, Paula Patrício\thanks{Departamento de Matemática and Centro de Matemática e Aplicações, Faculdade de Ciências e Tecnologia, Universidade Nova de Lisboa, Quinta da Torre, 2829-516, Caparica, Portugal. Email:\{tjc,chalub,pcpr\}@fct.unl.pt}}
%\date{\today}

%\authorrunning{Short form of author list} % if too long for running head
% 
% \institute{Fabio A. C. C. Chalub \and Paula Patrício\at
%               Departamento de Matem\'atica and Centro de Matem\'atica e Aplica\c c\~oes, Faculdade de Ci\^encias e Tecnologia, Universidade Nova de Lisboa, Quinta da Torre, 2829-516, Caparica, Portugal. \\
%               Tel.: +351-212948388\\
%               Fax: +351-212948391\\
%               \email{pcpr@fct.unl.pt}           %  \\
% %             \emph{Present address:} of F. Author  %  if needed
%            \and
%            Tiago J. Costa \at
%               Funda\c c\~ao Champalimaud, Av. Bras\'ilia, 1400-038 Lisboa
% }

% \date{Received: date / Accepted: date}
% The correct dates will be entered by the editor

\date{\today}

\maketitle

\begin{abstract}
We consider three regions with different public health conditions. In the absence of migration among these regions, the first two have good health conditions and the disease free state is stable; for the third region, on the other hand, the only stable state is the endemic one. When migration is included in the model, we assume that the second region has a disease risk that makes its inhabitants prone to accept to be vaccinated, while the population in the first region tends to reject the vaccination, considered riskier that the disease. Therefore, the second region is a ``buffer zone'' between the two extremal regions. We study the basic reproductive ratio as a function of the vaccination in all regions and migration among them. This problem is studied numerically, showing explicit situations in which migration will have an overall positive effect in the disease dynamics, with and without vaccinations. We also find explicit formula in the limit of small (``closed borders'') and high migration (``open borders'').
\end{abstract}

\textbf{keywords:}
{Epidemic models, Vaccination, Meta-population, SIR model }

\section{Introduction}
\label{intro}
Compartmental models, SIR model being one of the most important examples, consider no other differences between individuals beyond disease status. Despite their simplicity, these models have provided important insight into disease transmission and control, e.g. the possibility to eradicate a disease without vaccinating the entire population, a concept know as ``herd immunity''. However, as no structure is introduced in the population, there is no information on who, where or when to vaccinate. To answer these questions, models structured by age, sex, risk-groups, coinfection, contact patterns, seasonality and/or spatial distribution, among other relevant characteristics, have been proposed. See textbooks and references works, e.g.,~\cite{MurrayI,Britton2003,Thieme_2003} for these and many other examples.

In this work we have turned our focus to space heterogeneity and its coupling with rational human behaviour with respect to vaccination, in a deterministic SIR model. Recently the introduction of space heterogeneity in compartmental models has been gaining increased recognition \cite{Youetal2013,Matrajtetal2013,Santermansetal2016,Rileyetal2015}.

Many recent works have also included human behaviour into epidemiological models. Some of these models assume that individuals choose to be vaccinated or not depending on the perceived risk of the vaccine relative to the risk of disease. Previous works show that assuming rational behaviour (i.e., risk minimization) of the population it is impossible to eliminate the disease for constant transmission rate \cite{Bauch2004,dOnofrio2007,Manfredi2009} or for  a time-periodic transmission function~\cite{Doutor2016}. To the best of our knowledge, however, no model in the literature considers spatial heterogeneity along with individual behaviour. 

We consider a patch model in order to accommodate space heterogeneity and human behaviour. Patch models with explicit movement can be tailored to mimic diffusive movement of individuals \cite{Cosner2015} and  are easier to analyse and suitable to investigate the implementation of control measures for different regions \cite{Wang&Mulone2003,Allenetal2007,Allenetal2009,Lachianyetal2012}.  Successful application to malaria~\cite{Xiao&Zou2014} and HIV \cite{Isdoryetal2015} have been conducted.

In this work, no explicit dynamics in the human behaviour is considered. However, the human behaviour in each patch is rational in the sense that the entire population accepts to be vaccinated or not according to the comparison of the perceived risk both of the disease and of the vaccine, which is assumed to be highly effective. Therefore a population accepts to be vaccinated only if the risk of getting the disease in a given patch is above a certain threshold, and does not accept otherwise.

We consider three neighbouring patches, corresponding to different transmission intensities, a first and second regions for which health conditions are good, i.e., such that in the absence of migration the disease free state is stable, and a third region where there is sustainable local transmission of the disease, represented by a stable endemic state in the SIR model. When migration is introduced in the model, the population in the second region will accept to be vaccinated, due to the proximity to the endemic region, but the population in the first region will still reject being vaccinated due to the rarity of local transmission.

The model is summarized by saying that there is local sustained transmission and vaccination refusal in the first region; no local sustained transmission, but vaccine acceptance in the second region, and finally, sustained transmission and vaccine acceptance in the third region.

We use this framework to compute the basic reproductive ratio. Despite being possible to obtain the analytical expression of $\mathcal{R}_0$, this expression is far too intricate to allow a simple analysis. Hence, we use large and small diffusion approximations of $\mathcal{R}_0$, to have insight in the parameters region that allow disease eradication. We also perform numerical simulations and we derive conditions to achieve disease elimination.

Our main goal is to find parameters allowing disease elimination by vaccinating only  areas for which the local population is prone to accept vaccination, given the inter-region migration rate.
As a secondary goal, we also want to understand the role of migration in the disease dispersion.

The paper is organized as follow: we finish the introduction revising SIR model with constant coefficients, in order to fix notation. In section~\ref{sec:model} we study the full model and prove the existence of the basic reproduction ratio $\mathcal{R}_0$ and  the existence of a unique disease endemic equilibrium for $\mathcal{R}_0>1$, which is locally asymptotically stable. In section~\ref{sec:control} we discuss how to jointly use vaccination and open borders to control an epidemic. In particular, we consider the level curves of $\mathcal{R}_0$ as a function of the vaccination in the second and third regions and of migration between all regions.
Finally, we discuss our findings in section~\ref{sec:discussion}.

%
%\section{Section title}
%\label{sec:1}
%Text with citations \cite{RefB} and \cite{RefJ}.
%\subsection{Subsection title}
%\label{sec:2}
%as required. Don't forget to give each section
%and subsection a unique label (see Sect.~\ref{sec:1}).
%\paragraph{Paragraph headings} Use paragraph headings as needed.
%\begin{equation}
%a^2+b^2=c^2
%\end{equation}

%
\subsection{SIR models with vaccination}

In order to fix notation and terminology, we start by recovering some known facts from spatially homogeneous S(usceptible)-I(nfectious)-R(ecovered) model. Let $S$, $I$ and $R$ denote the total number of susceptible, infectious and recovered individuals respectively, and let the real constant parameters $\beta>0$, $\mu>0$ and $\gamma>0$ be the transmission rate, death/birth rate and recovery rate, respectively. The so called SIR model is given by $S'=-\beta SI+\mu(N-S)$, $I'=\beta SI-(\mu+\gamma)I$, $R'=\gamma I-\mu R$. Due to the fact that the total population $S+I+R=N$ is constant, it is customary to ignore one of the equations, in our case, the equation for $R$. The so called basic reproductive rate $\Rnaught\bydef \frac{\beta N}{\gamma+\mu}$,  is such that if $\Rnaught\le 1$ the disease free state (i.e., the equilibrium state $(S_0,I_0)$, with $I_0\equiv 0$) is asymptotically stable and if $\Rnaught>1$ the epidemic state (i.e, the equilibrium state $(S_1,I_1)$ with $I_1>0$) is the only stable state. 

Adult vaccination is included subtracting a $vS$ term in the first equation and adding a correspondent term in the last one. Therefore, we introduce
\[
 \mathcal{R}_0[v]=\frac{1}{N}S_0[v]\mathcal{R}_0[0]\ ,
\]
where $\mathcal{R}_0[0]=\frac{\beta N}{\gamma+\mu}$, the basic reproductive number for a totally non vaccinated population 
and $S_0[v]=\frac{\mu N}{\mu+v}$, the number of susceptible at equilibrium in the disease free state. The coefficient $\mathcal{R}_0[v]$ is
such that the disease free state is asymptotically stable if and only if $\mathcal{R}_0[v]\le 1$, otherwise the epidemic state is the only stable state. 
The minimum vaccination able to guarantee long term stability of disease-free state is given by $v_{\mathrm{min}}\bydef\max\{0,\mu(\mathcal{R}_0-1)\}$.

\section{Metapopulation model}
\label{sec:model}

We use the notation ``-'', ``0'' and ``+'' to identify the three regions described in the Introduction. Regions are classified according to different transmission rates and disease/vaccination risk: the first region, ``$-$'', has no local sustained transmission and vaccination refusal; the second region, ``$0$'', has no local sustained transmission, but vaccine acceptance; and third region, ``$+$'', we have sustained transmission and vaccine acceptance; see Fig.~\ref{fig:three_parts_ODE}.
We define $\lambda_-$, $\lambda_0$ and $\lambda_+$, to denote the relative size, which is proportional the number of individuals living in each region in the stationary state.
The model can be written as a system of $3\times 3$ ordinary differential equations (ODE), corresponding to one SIR model for each region. Let 
$S_{\mOp},I_{\mOp},R_{\mOp}$ be the fraction of susceptible, infectious and recovered individuals in regions ``-'', ``0'' and ``+'', respectively. The rate of vaccination in each region is given by $v_{\mOp}$ and the transmission rate
is given by $\beta_{\mOp}$. We consider two different  diffusion rates $d_-$ and $d_+$, between regions ``-'' and ``0''; and regions ``0'' and``+'', respectively.

The model can be written as the following system of ODEs
\begin{align}
\label{ODE_S-}
  S_-'&=\mu\lambda_- N-\beta_-I_-S_--(v_-+\mu)S_-+d_-(\lambda_-S_0-\lambda_0S_-)\ ,\\
\label{ODE_I-}
  I_-'&=\beta_-I_-S_--(\gamma+\mu)I_-+d_-(\lambda_-I_0-\lambda_0I_-)\ ,\\
\label{ODE_R-}
R_-'&=\gamma I_--\mu R_-+v_-S_-+d_-(\lambda_-R_0-\lambda_0R_-)\ ,\\
\label{ODE_S0}
S_0'&=\mu\lambda_0N-\beta_0I_0S_0-(v_0+\mu)S_0+d_-\lambda_0S_--(d_-\lambda_-+d_+\lambda_+)S_0+d_+\lambda_0S_+\ ,\\
\label{ODE_I0}
I_0'&=\beta_0I_0S_0-(\gamma+\mu)I_0+d_-\lambda_0I_--(d_-\lambda_-+d_+\lambda_+)I_0+d_+\lambda_0I_+\ ,\\
\label{ODE_R0}
R_0'&=\gamma I_0-\mu R_0+v_0S_0+d_-\lambda_0R_--(d_-\lambda_-+d_+\lambda_+)R_0+d_+\lambda_0R_+\ ,\\
\label{ODE_S+}
S_+'&=\mu\lambda_+ N-\beta_+I_+S_+-(v_++\mu)S_++d_+(\lambda_+S_0-\lambda_0S_+)\ ,\\
\label{ODE_I+}
I_+'&=\beta_+I_+S_+-(\gamma+\mu)I_++d_+(\lambda_+I_0-\lambda_0I_+)\ ,\\
\label{ODE_R+}
R_+'&=\gamma I_+-\mu R_++v_+S_++d_+(\lambda_+R_0-\lambda_0R_+)\ .
 \end{align}

 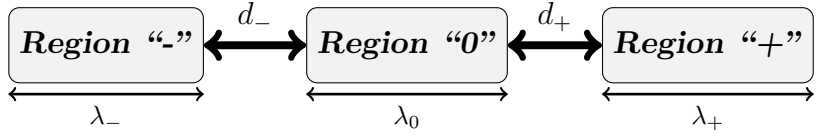
\begin{figure}
 \centering
  \begin{tikzpicture}
   \node (reg_-) [regiao] {\textbf{\textsl{Region ``-''}}};
\node (reg_0) [regiao,right of=reg_-,xshift=3cm] {\textbf{\textsl{Region ``0''}}};
\node (reg_+) [regiao,right of=reg_0,xshift=3cm] {\textbf{\textsl{Region ``+''}}};

\draw[<->,line width=3pt] (reg_-) -- node[above] {$d_-$} (reg_0);
\draw[<->,line width=3pt] (reg_0) -- node[above] {$d_+$}(reg_+);

\draw[<->,line width=1pt] ([yshift=-0.65cm]reg_-.west) -- node[below,font=\footnotesize] {$\lambda_-$} ([yshift=-0.65cm]reg_-.east);
\draw[<->,line width=1pt] ([yshift=-0.65cm]reg_0.west) -- node[below,font=\footnotesize] {$\lambda_0$} ([yshift=-0.65cm]reg_0.east);
\draw[<->,line width=1pt] ([yshift=-0.65cm]reg_+.west) -- node[below,font=\footnotesize] {$\lambda_+$} ([yshift=-0.65cm]reg_+.east);

  \end{tikzpicture}
\caption{Schematic representation of the three region model classified according to different transmission rates and disease/vaccination risk. It is assumed that, the first region "$-$" has no local sustained transmission and vaccination refusal; the second region "$0$" has no local sustained transmission, but vaccine acceptance; and finally, in the third region "$+$", we have sustained transmission and vaccine acceptance. The relative size of the three regions are denoted by $\lambda_\mOp$, respectively, and $d_\pm$ indicates the migration rate between adjacent sites.}
\label{fig:three_parts_ODE}
\end{figure}

We define $N_{+,0,-}=S_{+,-,0}+I_{+,0,-}+R_{+,0,-}$.  Total population $N(t)=N_-(t)+N_0(t)+N_+(t)$ is constant. For each region we have
\begin{align*}
 N_-'&=\mu(\lambda_- N-N_-)+d_-(\lambda_-N_0-\lambda_0N_-)\ ,\\
 N_0'&=\mu(\lambda_0 N-N_0)+d_-\lambda_0N_--(d_-\lambda_-+d_+\lambda_+)N_0+d_+\lambda_0N_+\ ,\\
 N_+'&=\mu(\lambda_+ N-N_+)+d_+(\lambda_+N_0-\lambda_0N_+)\ .
\end{align*}
Hence, if we assume that the total population is initially uniformly distributed, i.e. $N_{-,0,+}(0)=\lambda_{-,0,+} N$, then 
$N_\mOp(t)=\lambda_\mOp N$ for all $t\in\R_+$.

Furthermore, we assume that the transmission rates, $\beta_\mOp$, are such such that the basic reproductive ratio, in the absence of diffusion and vaccination, verifies $\mathcal{R}_0^-<\mathcal{R}_0^0<1<\mathcal{R}_0^+$, where $\mathcal{R}_0^\mOp=\frac{\beta_{\mOp}S_{\mOp}}{\gamma+\mu}=\frac{\beta_{\mOp}\lambda_{\mOp}N}{\gamma+\mu}$, i.e., without migration, regions ``-'' and  ``0'' would have no sustained transmission and in region ``+'' disease would be endemic.

\subsection{Model analysis} 

Consider the disease-free solution of system without vaccination, i.e., equations~(\ref{ODE_S-})--(\ref{ODE_R+}) with $I_\mOp\equiv0$ and $v_\mOp\equiv0$. In this case:

\begin{align*}
 0&=\mu\lambda_- N-\mu S_-^\df+d_-(\lambda_-S_0^\df-\lambda_0S_-^\df)\ ,\\
  0&=\mu\lambda_0 N-\mu S_0^\df+d_-\lambda_0S_-^\df-(d_-\lambda_-+d_+\lambda_+)S_0^\df+d_+\lambda_0S_+^\df \ ,\\
 0&=\mu\lambda_+ N-\mu S_+^\df+d_+(\lambda_+S_0^\df-\lambda_0S_+^\df)\ .
\end{align*}
It is clear that at the disease-free equilibrium without vaccination, $S_{\mOp}^{\df}=\lambda_{\mOp}N$.\\

To obtain the disease-free equilibrium with vaccination we rewrite equation~(\ref{ODE_S-}),~(\ref{ODE_S0}) and~(\ref{ODE_S+}) with $I_{\mOp}\equiv 0$, i.e.,
\begin{align*}
 (v_-+\mu+d_-\lambda_0) S_- -d_-\lambda_-S_0&=\mu\lambda_-N\ ,\\
  -d_-\lambda_0S_- + (v_0+\mu+d_-\lambda_-+d_+\lambda_+)S_0 -d_+\lambda_0S_+&=\mu\lambda_0 N\ ,\\
 -d_+\lambda_+S_0+(v_++\mu+d_+\lambda_0) S_+&=\mu\lambda_+ N\ .
\end{align*}
that can be written, in the matrix form, as the linear system $D\hat S=\mu N\Lambda$, $\hat S=(S_-,S_0,S_+)$, $\Lambda=(\lambda_-,\lambda_0,\lambda_+)$ and
 
 \[
 D=\left(\begin{matrix}
                           v_-+\mu+d_-\lambda_0&-d_-\lambda_- & 0\\
                           -d_-\lambda_0 & v_0 + \mu + d_-\lambda_-+ d_+\lambda_+&  - d_+\lambda_0\\
                           0 & -d_+\lambda_+ & v_+ + \mu + d_+\lambda_0
                         \end{matrix}\right)\ .
\]

From explicit calculations, we find that $D^{-1}$ is entrywise positive, and therefore $\hat S=\mu N D^{-1}\Lambda$ is uniquely defined and entrywise positive. Finally, we conclude that,

\begin{proposition}
System~(\ref{ODE_S-})--(\ref{ODE_R+}) has a unique disease free equilibrium.
\end{proposition}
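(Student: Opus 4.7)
The plan is to first verify that setting $I_\mOp\equiv 0$ in~(\ref{ODE_S-})--(\ref{ODE_R+}) is consistent: equations~(\ref{ODE_I-}),~(\ref{ODE_I0}),~(\ref{ODE_I+}) have right-hand sides in which every term carries a factor of some $I_j$, so they vanish identically. The equilibrium problem then decouples into two linear $3\times 3$ systems: the system $D\hat S=\mu N\Lambda$ for $\hat S=(S_-,S_0,S_+)$ already displayed in the excerpt, and an analogous system for $(R_-,R_0,R_+)$.

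The heart of the argument is to show that $D$ is nonsingular with $D^{-1}$ entrywise positive, and I would bypass the cofactor grind in favor of M-matrix theory. The off-diagonal entries of $D$ are manifestly nonpositive, so $D$ is a Z-matrix; computing column sums, the migration contributions cancel, leaving $v_-+\mu$, $v_0+\mu$, $v_++\mu$, each strictly positive since $\mu>0$. Hence $D$ is strictly column diagonally dominant with positive diagonal, which together with the Z-matrix property makes $D$ a nonsingular M-matrix and yields $D^{-1}\ge 0$ entrywise. Moreover, the off-diagonal entries $-d_\pm\lambda_*$ are strictly negative whenever $d_\pm>0$, so the tridiagonal graph is irreducible and $D^{-1}>0$ entrywise. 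Multiplying the strictly positive vector $\mu N\Lambda$ by $D^{-1}$ delivers a unique and strictly positive $\hat S$.

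For the $R$ components, setting derivatives to zero with $I_\mOp=0$ gives a linear system $\tilde D\hat R=(v_-S_-,v_0S_0,v_+S_+)^{\top}$, where $\tilde D$ has the same sparsity and sign pattern as $D$ but with every $v_\mOp$ replaced by $0$; column sums are now simply $\mu>0$, so the same M-matrix argument yields a unique and nonnegative $\hat R$. Alternatively, one could invoke the conservation law $N_\mOp(t)\equiv\lambda_\mOp N$ noted in the excerpt and write $R_\mOp=\lambda_\mOp N-S_\mOp$ directly. The only step with any technical content is the M-matrix verification for $D$; once that is in place the remainder is bookkeeping, and I do not anticipate any hidden obstacle.
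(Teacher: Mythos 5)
Your proof is correct, and it reaches the key fact by a genuinely different route than the paper. The paper simply asserts, ``from explicit calculations,'' that $D^{-1}$ is entrywise positive — i.e.\ it relies on computing the inverse (or its cofactors) of the $3\times 3$ matrix directly. You instead observe that $D$ is a Z-matrix whose column sums are $v_\mOp+\mu>0$ (the migration terms telescope away), hence strictly column diagonally dominant, hence a nonsingular M-matrix with $D^{-1}\ge 0$, upgraded to $D^{-1}>0$ by irreducibility when $d_\pm>0$. This buys a conceptual, computation-free argument that would scale unchanged to $n$ patches, and it isolates exactly why positivity holds (mass balance of the migration fluxes); the paper's explicit computation buys nothing beyond the $3\times 3$ case but requires no external theory. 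You are also more careful than the paper on two counts: you check that $I_\mOp\equiv 0$ is consistent with the $I$-equations, and you close the system by determining the $R$-components (either via the analogous M-matrix system $\tilde D\hat R=(v_-S_-,v_0S_0,v_+S_+)^\top$ or via the conservation identity $R_\mOp=\lambda_\mOp N-S_\mOp$, which at equilibrium follows from uniqueness of the solution of $\tilde D\hat N=\mu N\Lambda$), whereas the paper only solves for $\hat S$. No gaps.
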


The exact solution will shed no light on this particular problem so we omit it. Latter, we will present the expression up to the first order in the diffusion coefficients.

Following \cite{ShuaiDriessche2013}, the basic reproduction number of system ~(\ref{ODE_S-})--(\ref{ODE_R+}) is given by the spectral radius of the next generation matrix $FV^{-1}$,
\[
\mathcal{R}_0[\bv]=\rho(FV^{-1}),\qquad\bv\bydef(v_-,v_0,v_+)\ ,
\]
where 
\begin{align*}
F&=\left(\begin{matrix}
\beta_-S_-^{\df}& 0 &0\\
0 &\beta_0S_0^{\df}& 0\\
0& 0 &\beta_-S_+^{\df}
\end{matrix}\right)\quad \text{and} \\
V&=\left(\begin{matrix}
\gamma+\mu+d_-\lambda_0& -d_-\lambda_- &0\\
-d_-\lambda_0 &\gamma+\mu+d_-\lambda_-+d_+\lambda_+& -d_+\lambda_0\\
0& -d_+\lambda_+ &\gamma+\mu+d_+\lambda_0
\end{matrix}\right)\ .
\end{align*}

Let
\[
\Gamma=\left\lbrace (S_{\mOp},I_{\mOp})\in\R^{3\times 2}:S_{\mOp}+I_{\mOp}\leq \lambda_{\mOp} \mbox{ and } S_{\mOp}\leq S_{\mOp}^{\df} \right\rbrace .
\]
Note that $\Gamma$ invariant under the flow of system~(\ref{ODE_S-})--(\ref{ODE_R+}). Let $\mathring{\Gamma}$ denote the interior of $\Gamma$ and $\partial \Gamma$ its boundary.
We finish this section with the following stability result:

\begin{theorem}\label{thm:sharp_threshold_prop}
Consider the  model~(\ref{ODE_S-})--(\ref{ODE_R+}). For $d_-$ and $d_+>0$, there exist $\mathcal{R}_0[\bv]$ such that
\begin{enumerate}
\item if $\mathcal{R}_0[\bv]\leq 1$, then the disease free equilibrium is global asymptotically stable, and
\item if $\mathcal{R}_0[\bv]>1$, then the disease free equilibrium is unstable and there is a unique endemic equilibrium that is locally asymptotically stable in $\mathring{\Gamma}$.
\end{enumerate}

\end{theorem}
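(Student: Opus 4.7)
The plan is to follow the standard next-generation-matrix framework of van den Driessche--Watmough (as extended by Shuai--van den Driessche \cite{ShuaiDriessche2013}, already cited by the authors) for the disease-free side of the theorem, and then combine persistence theory with a fixed-point/monotonicity argument for the endemic side. First I would observe that $V$ is a non-singular $M$-matrix: its diagonal entries equal $\gamma+\mu$ plus non-negative migration contributions, and its off-diagonal entries are non-positive. Hence $V^{-1}\geq 0$ entrywise, $FV^{-1}\geq 0$, and by Perron--Frobenius $\mathcal{R}_0[\bv]=\rho(FV^{-1})$ is a simple real non-negative eigenvalue with a positive left eigenvector $\mathbf{w}$. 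The van den Driessche--Watmough theorem then directly yields local asymptotic stability of the disease-free equilibrium (DFE) for $\mathcal{R}_0[\bv]<1$ and instability for $\mathcal{R}_0[\bv]>1$.

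Global asymptotic stability of the DFE on $\Gamma$ in the subthreshold case I would establish by a Lyapunov argument. Normalising $\mathbf{w}^\top V^{-1}F=\mathcal{R}_0[\bv]\,\mathbf{w}^\top$, set $L(\mathbf{I})=\mathbf{w}^\top V^{-1}\mathbf{I}$ with $\mathbf{I}=(I_-,I_0,I_+)^\top$. The infectious subsystem reads $\mathbf{I}'=(F_S-V)\mathbf{I}$, where $F_S=\mathrm{diag}(\beta_-S_-,\beta_0S_0,\beta_+S_+)$. On $\Gamma$ we have $S_\mOp\leq S_\mOp^\df$, so $F_S\leq F$ entrywise, and therefore
\[
L'=\mathbf{w}^\top V^{-1}(F_S-V)\mathbf{I}\ \leq\ \mathbf{w}^\top \bigl(V^{-1}F-\mathrm{Id}\bigr)\mathbf{I}\ =\ (\mathcal{R}_0[\bv]-1)\,\mathbf{w}^\top \mathbf{I}\ \leq\ 0.
\]
LaSalle's invariance principle, combined with irreducibility of the migration coupling across the three patches (so that the largest invariant subset of $\{L'=0\}$ in $\Gamma$ is the DFE), finishes the subthreshold case and absorbs the borderline $\mathcal{R}_0[\bv]=1$ that van den Driessche--Watmough leaves open.

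For $\mathcal{R}_0[\bv]>1$, existence of an endemic equilibrium follows from uniform persistence of the $I_\mOp$ components via Thieme's acyclicity theorem (the DFE is the unique boundary equilibrium and is a uniform weak repeller when $\mathcal{R}_0[\bv]>1$), together with dissipativity on the compact, forward-invariant set $\Gamma$, through a Brouwer fixed-point argument. For uniqueness I would exploit that at an endemic equilibrium the susceptible vector $\hat S$ solves a diagonally-perturbed version of the linear system $D\hat S=\mu N\Lambda$ already analysed for the DFE, giving $\hat S=\hat S(\mathbf{I})$ as a strictly decreasing entrywise function of $\mathbf{I}$; substituting back into the $\mathbf{I}$-equations recasts the endemic condition as requiring the Perron eigenvalue of an $\mathbf{I}$-dependent analogue of $FV^{-1}$ to equal $1$, whose strict monotonicity in $\mathbf{I}$ forces the positive solution to be unique. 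Local asymptotic stability of this endemic equilibrium then follows from linearisation, using the equilibrium identities and standard $M$-matrix/Perron--Frobenius arguments to show all eigenvalues of the Jacobian have negative real part. The hard step I expect to be uniqueness: the entries of $D^{-1}\Lambda$ are algebraically cumbersome, and a clean monotonicity proof will probably have to exploit the tridiagonal structure of both $D$ and $V$ rather than rely on a fully general lemma.
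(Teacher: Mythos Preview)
Your approach is sound but quite different from the paper's. The paper does not argue from scratch at all: it observes that the system~(\ref{ODE_S-})--(\ref{ODE_R+}) is a special case of the general $n$-patch SIR model with migration treated by Li and Shuai~\cite{LiShuai2009}, identifies the relevant parameters ($\Lambda_i=\mu\lambda_{\mOp}N$, $d_i^S=v_{\mOp}+\mu$, $d_i^I=\mu$, and the migration matrix $A=B$ given explicitly), checks that $A$ is irreducible when $d_\pm>0$, and then cites \cite[Proposition~3.2 and Theorem~4.1(3)]{LiShuai2009} for both parts of the theorem. That is the entire proof.

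Your DFE argument is in fact the same machinery one step removed: the Lyapunov function $L=\mathbf{w}^\top V^{-1}\mathbf{I}$ is exactly the construction in~\cite{ShuaiDriessche2013}, which is what underlies the Li--Shuai threshold result. So for part~1 you are reproducing, rather than bypassing, the cited literature. For part~2 the routes genuinely diverge: Li--Shuai obtain existence, uniqueness, and (in fact global) stability of the endemic equilibrium via their graph-theoretic Lyapunov functions, whereas you propose persistence theory for existence and a monotone--spectral argument for uniqueness. Your uniqueness sketch is the one place that needs tightening: strict monotonicity of the Perron eigenvalue of $V^{-1}F_{S(\mathbf{I})}$ in $\mathbf{I}$ only rules out \emph{comparable} pairs of equilibria; to exclude incomparable ones you need a sublinearity (strong subhomogeneity) property of the map $\mathbf{I}\mapsto V^{-1}F_{S(\mathbf{I})}\mathbf{I}$ and a Krasnosel'ski\u{\i}-type fixed-point argument, not monotonicity alone. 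The paper avoids this entirely by outsourcing to~\cite{LiShuai2009}.
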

\begin{proof}
We use the notation of~\cite{LiShuai2009}, and define $\Lambda_i=\mu \lambda_{\mOp}N$, $d_i^S=v_{\mOp}+\mu$ and $d_i^I=\mu$ and matrices $A=B=\displaystyle \left(\begin{smallmatrix}
0&d_-\lambda_-&0\\
d_-\lambda_0&0&d_+\lambda_0\\
0&d_+\lambda_+&0
\end{smallmatrix}\right)$, which are irreducible for $d_-$ and $d_+>0$.From~\cite[Proposition 3.2]{LiShuai2009}, we conclude the existence of a locally asymptotically stable endemic equilibrium $P^*=(S^*_{\mOp},I^*_{\mOp})\in\mathring{\Gamma}$. Stability results follow from  \cite[Theorem 4.1(3)]{LiShuai2009}.
% \qed
\end{proof}

\section{``Open borders'', vaccination and epidemic control}
\label{sec:control}

Structural changes in the epidemic dynamics, as e.g., the use of other prophylactic measures that results in decreased transmission rate $\beta$, may be implemented but are not the main concern of the present work, in which we focus only in vaccination and migration. In this section, we will study the parameter region that prevents an outbreak (i.e., able to make the disease-free state stable). We assume that diffusion between regions is equal $d=d_+=d_-$.

Using a computer-algebra software (namely, Maple 13), we obtain asymptotic expressions for the basic reproductive ratio in the limit $d\to0$ and compare with the limit $d\to\infty$ of the basic reproductive ratio, in order to have an heuristics in the behaviour of $\mathcal{R}_0$ as a function of migration. 

\begin{lemma}
 Consider the disease free-state of the  system~(\ref{ODE_S-})--(\ref{ODE_R+}) and let $S_\mOp^\df[\bv]$ be the solution with vaccination given by $\bv=(v_-,v_0,v_+)$. Assume furthermore that in the limit of null migration $\lim_{d \to 0} \mathcal{R}_0[\bv] \geq 1$.. Then, up to first order in $d\ll 1$, we find
 \begin{equation*}
  \mathcal{R}_0[\bv]
  =\frac{\lambda_+ N\beta_+\mu}{(\gamma+\mu)(\mu+v_+)}\left[1-d\lambda_0\left(\frac{1}{\gamma+\mu}+\frac{1}{v_++\mu}-\frac{1}{v_0+\mu}\right)\right]+\mathcal{O}(d^2)\ .
 \end{equation*}
In the limit of large migration, we find
\begin{equation*}
\lim_{d\to\infty} \mathcal{R}_0[\bv]=\frac{\mu N(\beta_-\lambda_-^2+\beta_0\lambda_0^2+\beta_+\lambda_+^2)}{(\gamma+\mu)[(\mu+v_-)\lambda_-+(\mu+v_0)\lambda_0+(\mu+v_+)\lambda_+]}\ .
\end{equation*}

\end{lemma}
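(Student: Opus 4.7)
The plan rests on the next-generation representation $\mathcal{R}_0[\bv]=\rho(FV^{-1})$ together with the observation that $V$ and $D$ share the same migration part: $V=(\gamma+\mu)I+d\tilde M$ and $D=\operatorname{diag}(v_-+\mu,v_0+\mu,v_++\mu)+d\tilde M$, where
\[
\tilde M=\begin{pmatrix}
\lambda_0 & -\lambda_- & 0\\
-\lambda_0 & \lambda_-+\lambda_+ & -\lambda_0\\
0 & -\lambda_+ & \lambda_0
\end{pmatrix}.
\]
A direct inspection gives $\mathbf{1}^{T}\tilde M=\bzero^{T}$ and $\tilde M\Lambda=\bzero$, so that $\mathbf{1}$ and $\Lambda$ span the left and right kernels of $\tilde M$; this biorthogonal structure will drive the $d\to\infty$ asymptotics.

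For the small-$d$ expansion, I would Neumann-expand both inverses: $D^{-1}=D_0^{-1}-dD_0^{-1}\tilde MD_0^{-1}+\mathcal{O}(d^{2})$ with $D_0=\operatorname{diag}(v_\mOp+\mu)$, which via $\hat S^\df=\mu N D^{-1}\Lambda$ yields $S_\mOp^\df$ to first order in $d$; and $V^{-1}=(\gamma+\mu)^{-1}I-d(\gamma+\mu)^{-2}\tilde M+\mathcal{O}(d^{2})$. At $d=0$, $FV^{-1}$ is diagonal with entries $\mathcal{R}_0^{\mOp}[v_\mOp]$; since $\mathcal{R}_0^\mOp[v_\mOp]\le\mathcal{R}_0^\mOp$ and the standing assumption gives $\mathcal{R}_0^-,\mathcal{R}_0^0<1$, the hypothesis $\lim_{d\to 0}\mathcal{R}_0[\bv]\ge 1$ forces the Perron eigenvalue to be the simple one $\mathcal{R}_0^+[v_+]=\mu N\beta_+\lambda_+/[(\gamma+\mu)(v_++\mu)]$, with eigenvector $e_3$. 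Standard simple-eigenvalue perturbation theory then identifies the first-order correction with the $(3,3)$ entry of $\partial_d(FV^{-1})|_{d=0}$; the only piece still needed is the third component of $\tilde MD_0^{-1}\Lambda$, which equals $\lambda_0\lambda_+\bigl[1/(v_++\mu)-1/(v_0+\mu)\bigr]$. Assembling these contributions and factoring out $\mathcal{R}_0^+[v_+]$ reproduces the bracketed expression in the lemma.

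For the $d\to\infty$ regime, I would exploit the rank drop of $\tilde M$. Since $\tilde M$ is (up to sign convention) the generator of an irreducible three-state Markov chain, its zero eigenvalue is simple and its remaining eigenvalues are positive; rewriting $V^{-1}=d^{-1}(\tilde M+(\gamma+\mu)d^{-1}I)^{-1}$ and passing to the limit, the contributions from the nonzero eigenvalues vanish and $\lim_{d\to\infty}V^{-1}=(\gamma+\mu)^{-1}P_0$, where $P_0=\Lambda\mathbf{1}^{T}$ is the rank-one spectral projector onto $\ker\tilde M$ (normalised using $\mathbf{1}^{T}\Lambda=1$). In parallel, $D\hat S^\df=\mu N\Lambda$ combined with $\tilde M\Lambda=\bzero$ forces $\hat S^\df\to C\Lambda$ for some scalar $C$; left-multiplying by $\mathbf{1}^{T}$ (which annihilates $d\tilde M$) determines $C=\mu N/\bigl[\sum_{j\in\{-,0,+\}}(\mu+v_j)\lambda_j\bigr]$, and hence $S_\mOp^{\df,\infty}=C\lambda_\mOp$. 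Consequently $FV^{-1}$ admits the rank-one limit $(\gamma+\mu)^{-1}F_\infty\Lambda\mathbf{1}^{T}$, whose spectral radius equals its trace $\mathbf{1}^{T}F_\infty\Lambda/(\gamma+\mu)$; computing this trace directly yields the stated formula.

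The main obstacle is bookkeeping in the small-$d$ step, where first-order corrections enter through both $F$ (via $S^\df$) and $V^{-1}$; the hypothesis $\lim_{d\to 0}\mathcal{R}_0[\bv]\ge 1$ is invoked precisely to rule out ties at the top of the unperturbed spectrum and thus legitimise simple-eigenvalue perturbation theory. The $d\to\infty$ argument is conceptually cleaner but rests on recognising that the rank drop of $\tilde M$ renders the limiting next-generation matrix rank one, which is what makes a closed-form spectral-radius computation feasible without symbolically diagonalising $3\times 3$ matrices.
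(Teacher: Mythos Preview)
Your argument is correct, and it is genuinely different from---and more informative than---the paper's own proof. The paper simply reports that the three eigenvalues of $FV^{-1}$ were computed symbolically (with Maple), that at $d=0$ they reduce to $\mathcal{R}_0^{\mOp}[v_{\mOp}]$, and that the small-$d$ formula is the first-order expansion of the eigenvalue branching from $\mathcal{R}_0^+[v_+]$; for $d\to\infty$ it just states that two eigenvalues tend to zero and the third gives the limit. No structural explanation is offered.

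Your route, by contrast, isolates the common migration block $\tilde M$ in both $V$ and $D$, uses Neumann expansions together with simple-eigenvalue perturbation (legitimate because the unperturbed matrix is diagonal with $e_3$ as both left and right eigenvector), and then exploits the rank-one spectral projector $P_0=\Lambda\mathbf{1}^{T}$ in the $d\to\infty$ regime to reduce the spectral radius to a trace. This buys you two things the paper does not have: a transparent reason why the $d\to\infty$ denominator is the $\mathbf{1}^{T}$-weighted average of the $(\mu+v_j)\lambda_j$ (it is the kernel constraint $\mathbf{1}^{T}D_0\hat S^\df=\mu N$, valid for every $d$), and a proof that does not rely on a computer-algebra system. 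The only point worth making explicit when you write it up is that the $(3,3)$ entry of $\partial_d(FV^{-1})|_{d=0}$ receives two contributions---one from $F_1$ via $(\tilde M D_0^{-1}\Lambda)_3$ and one from $-(\gamma+\mu)^{-2}F_0\tilde M$ via $(\tilde M)_{33}=\lambda_0$---and that their sum factors as $-\mathcal{R}_0^+[v_+]\,\lambda_0\bigl[(\gamma+\mu)^{-1}+(v_++\mu)^{-1}-(v_0+\mu)^{-1}\bigr]$, which is exactly the bracket in the lemma.
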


\begin{proof}
 For the limit of small $d$, we compute the three eigenvalues of the matrix $FV^{-1}$. In the limit $d\to 0$, these are $\mathcal{R}_0^-[v_-]$, $\mathcal{R}_0^0[v_0]$, and $\mathcal{R}_0^+[v_+]$. We assume that for $d\ll 1$, the spectral ratio of $FV^{-1}$ will be given by the eigenvalue representing an order $d$ perturbation of $\mathcal{R}_0^+[v_+]\bydef\frac{\mu\lambda_+ N\beta_+}{(\gamma+\mu)(v_+\mu)}$, from which the first result follows. For the limit $d\to\infty$, we find that 2 eigenvalues of $FV^{-1}$ converge to 0, and therefore, the spectral radius is given by the limit of the third eigenvalue.
%  \qed
\end{proof}

As a direct consequence of the previous result, we have a situation for which a small migration $d\ll 1$ will decrease the value of $\mathcal{R}_0$, and for some parameter combination, simply opening the borders might eliminate the epidemic.
Of course, $\mathcal{R}_0$ gives information of the global behaviour of the system, i.e., $\mathcal{R}_0$ does not give information about the dynamics of a single patch, which means that even if the global situation becomes better with a larger $d$ (i.e., a smaller $\mathcal{R}_0$), if $\mathcal{R}_0$  is still larger than one, individuals in the left region, and possibly in the central one, may be refractory to increase $d$. 

\begin{corollary}
 If
 \[
  \frac{1}{\gamma+\mu}+\frac{1}{v_++\mu}-\frac{1}{v_0+\mu}>0\ ,
 \]
then, the introduction of a small motility between regions will decrease the overall basic reproductive ratio. In particular, in the absence of vaccination, the overall effect of a small, but positive, diffusion, is to decrease the basic reproductive ratio. 
\end{corollary}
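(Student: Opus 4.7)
The plan is to read off both assertions directly from the first-order expansion of $\mathcal{R}_0[\bv]$ established in the preceding lemma. I would first note that the leading (zeroth-order in $d$) term in that expansion, $\mathcal{R}_0^+[v_+]=\frac{\lambda_+ N\beta_+\mu}{(\gamma+\mu)(\mu+v_+)}$, is strictly positive. Consequently, the sign of $\mathcal{R}_0[\bv]-\mathcal{R}_0[\bv]\big|_{d=0}$ is determined, for $0<d\ll 1$, by the sign of the coefficient of $d$, namely by $-\lambda_0\bigl(\frac{1}{\gamma+\mu}+\frac{1}{v_++\mu}-\frac{1}{v_0+\mu}\bigr)$. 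Since $\lambda_0>0$, the hypothesis that the bracket is positive is equivalent to this coefficient being negative, which is equivalent to $\mathcal{R}_0[\bv]$ decreasing as $d$ is turned on. This yields the first assertion, with the usual caveat that the conclusion holds for $d$ smaller than some $d^{\ast}>0$ depending on the parameters (so that the $\mathcal{O}(d^2)$ remainder is dominated by the linear term).

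For the second assertion I would simply substitute $v_0=v_+=0$ into the bracket; the terms $\frac{1}{v_++\mu}$ and $\frac{1}{v_0+\mu}$ then cancel and one is left with $\frac{1}{\gamma+\mu}$, which is manifestly positive. Hence the hypothesis of the first part is automatically satisfied in the non-vaccinated case, and the same conclusion on the decrease of $\mathcal{R}_0$ follows.

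The only conceptual point worth noting, rather than a genuine obstacle, is that the expansion tracks the specific eigenvalue of $FV^{-1}$ that reduces to $\mathcal{R}_0^+[v_+]$ at $d=0$. The lemma's standing hypothesis $\lim_{d\to 0}\mathcal{R}_0[\bv]\geq 1$, combined with the assumption $\mathcal{R}_0^-<\mathcal{R}_0^0<1$ already made in Section~\ref{sec:model}, singles out $\mathcal{R}_0^+[v_+]$ as the strictly dominant eigenvalue at $d=0$. Continuity of simple eigenvalues under small perturbations then guarantees that this branch remains the spectral radius for sufficiently small $d$, so the expansion genuinely represents the basic reproductive ratio and the sign analysis above is legitimate.
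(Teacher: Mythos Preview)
Your proposal is correct and matches the paper's approach: the corollary is presented there as an immediate consequence of the lemma's first-order expansion, with no separate proof given. Your additional remark on eigenvalue continuity, justifying why the perturbed branch through $\mathcal{R}_0^+[v_+]$ remains the spectral radius for small $d$, is a welcome clarification that the paper leaves implicit.
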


A second important corollary shows that the effect of large migration may be positive, depending on the specific parameters of the problem.

\begin{corollary}
 Assume
 \begin{equation}\label{eq:cor2}
 \left(\frac{\beta_0\lambda_0}{\beta_+\lambda_+}-\frac{\mu+v_0}{\mu+v_+}\right)\lambda_0+\left(\frac{\beta_-\lambda_-}{\beta_+\lambda_+}-\frac{\mu+v_-}{\mu+v_+}\right)\lambda_-<0\ .
 \end{equation}
Then there is $d_{\mathrm{c}}$ large enough such that for $d>d_{\mathrm{c}}$, $\displaystyle\mathcal{R}_0[\bv]<\lim_{d\to0}\mathcal{R}_0[\bv]$. In particular, equation~(\ref{eq:cor2}) is true if
\[
 \max\left\{\frac{\beta_0\lambda_0}{\mu+v_0},\frac{\beta_-\lambda_-}{\mu+v_-}\right\}<\frac{\beta_+\lambda_+}{\mu+v_+}\ .
\]
\end{corollary}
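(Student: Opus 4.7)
The plan is to reduce the corollary to a comparison between the two limits computed in the preceding lemma and then to invoke continuity of $\mathcal{R}_0[\bv]$ as a function of $d$. From the small-$d$ expansion in the lemma, one reads off
\[
\lim_{d\to 0}\mathcal{R}_0[\bv]=\frac{\lambda_+N\beta_+\mu}{(\gamma+\mu)(\mu+v_+)},
\]
while the large-$d$ limit is given explicitly by the second formula of the lemma. Since the matrices $F$ and $V$ entering the next-generation representation depend continuously on $d\ge 0$, and the spectral radius of a continuously varying matrix of fixed size is itself continuous, the map $d\mapsto\mathcal{R}_0[\bv]=\rho(FV^{-1})$ is continuous on $[0,\infty)$ and admits the stated limit as $d\to\infty$. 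Consequently, once the strict inequality
\[
\lim_{d\to\infty}\mathcal{R}_0[\bv]<\lim_{d\to 0}\mathcal{R}_0[\bv]
\]
is proved, continuity immediately yields the claimed threshold $d_c$: for every $d>d_c$, $\mathcal{R}_0[\bv]<\lim_{d\to 0}\mathcal{R}_0[\bv]$.

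The second step is to verify that hypothesis~(\ref{eq:cor2}) is algebraically equivalent to this strict inequality between limits. Substituting the two explicit expressions and cancelling the common positive factor $\mu N/(\gamma+\mu)$, the desired inequality becomes
\[
\frac{\beta_-\lambda_-^2+\beta_0\lambda_0^2+\beta_+\lambda_+^2}{(\mu+v_-)\lambda_-+(\mu+v_0)\lambda_0+(\mu+v_+)\lambda_+}<\frac{\beta_+\lambda_+}{\mu+v_+}.
\]
Both denominators are strictly positive, so cross-multiplication is legitimate. After cancelling the common term $(\mu+v_+)\beta_+\lambda_+^2$ on the two sides and grouping by $\lambda_-$ and $\lambda_0$, one arrives at
\[
\lambda_-\left[\frac{\mu+v_-}{\mu+v_+}-\frac{\beta_-\lambda_-}{\beta_+\lambda_+}\right]+\lambda_0\left[\frac{\mu+v_0}{\mu+v_+}-\frac{\beta_0\lambda_0}{\beta_+\lambda_+}\right]>0,
\]
which is precisely the negation of the left-hand side of~(\ref{eq:cor2}).

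Finally, for the ``In particular'' assertion: the stronger hypothesis $\beta_j\lambda_j/(\mu+v_j)<\beta_+\lambda_+/(\mu+v_+)$ for $j\in\{-,0\}$ rearranges to $\beta_j\lambda_j/(\beta_+\lambda_+)<(\mu+v_j)/(\mu+v_+)$, so each bracketed quantity in~(\ref{eq:cor2}) is strictly negative, and since $\lambda_-,\lambda_0>0$ the full sum is strictly negative. The only non-mechanical part of the argument is the careful algebraic reduction in the middle step; the spectral-radius continuity used at the start is standard, and the two limits themselves are supplied by the previous lemma. I therefore expect no genuine obstacle beyond keeping track of signs during the cross-multiplication.
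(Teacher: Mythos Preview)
Your argument is correct and is exactly the (implicit) route the paper takes: the corollary is stated without proof because it follows by comparing the two limits supplied by the preceding lemma, and your algebraic reduction showing that~(\ref{eq:cor2}) is equivalent to $\lim_{d\to\infty}\mathcal{R}_0[\bv]<\lim_{d\to 0}\mathcal{R}_0[\bv]$ is the intended verification. One small remark: you do not actually need continuity of $d\mapsto\mathcal{R}_0[\bv]$ on all of $[0,\infty)$ to obtain $d_{\mathrm c}$; the mere existence of the limit at infinity (together with the strict inequality between the two limits) already gives, by the definition of limit, a threshold beyond which $\mathcal{R}_0[\bv]$ stays below $\lim_{d\to 0}\mathcal{R}_0[\bv]$.
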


If we assume a positive migration $d>0$,  sustained transmission in the third region makes it possible to get disease from a migrant, therefore, the risk in each region will depend on the disease dynamics in the adjacent regions. In particular, we will consider a scenario in which the risk in the middle region is such that the population accepts to be vaccinated (as is the case in the third region), while the population in the first region, due to real or imaginary perceived vaccination risks, do no accept to be vaccinated. Note that migrants in the first region come exclusively from the second region, where there is no sustained transmission. Therefore, the second region is a ``buffer zone''. 

Figure~\ref{fig:R0equal1} shows the curves $\mathcal{R}_0[\bv]=1$, with $\bv=(0,v_0,v_+)$, for different values of $d$ and for two different parameter sets. Note that in the limit $d\to0$, the vaccination in the second region, $v_0$, becomes irrelevant for the dynamics. For $d>0$, and $v_+>0$, but still insufficient to prevent the outbreak, it may be possible (depending on $v_+$) to increase only $v_0$ such that $\mathcal{R}_0[\bv]<1$, i.e., an outbreak can be prevented with extra vaccination only on the ``buffer zone''. For large $d$ the curves $\mathcal{R}_0[\bv]=1$ converge to straight lines, with inclination depending on the model parameters, representing the fact that an increase in $v_0$ and $v_+$ have a linear effect in the overall dynamics. In Fig.~\ref{fig:R0equal1} (above) the diffusion alone is not able to prevent the outbreak, while in the example below, this is the case.

\begin{figure}
 \includegraphics[width=0.45\textwidth]{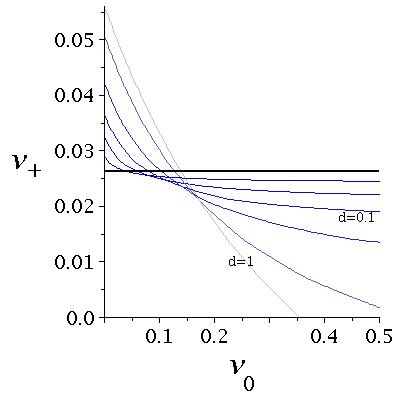}
 \includegraphics[width=0.45\textwidth]{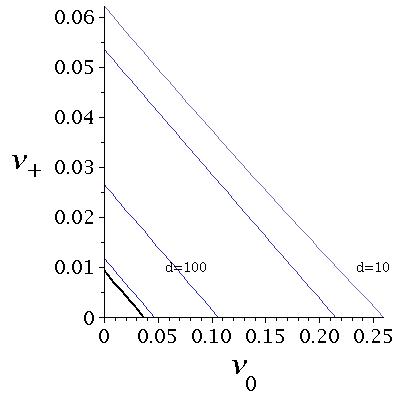}\\
 \includegraphics[width=0.45\textwidth]{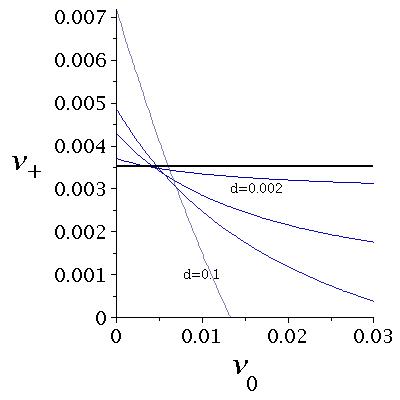}
 \includegraphics[width=0.45\textwidth]{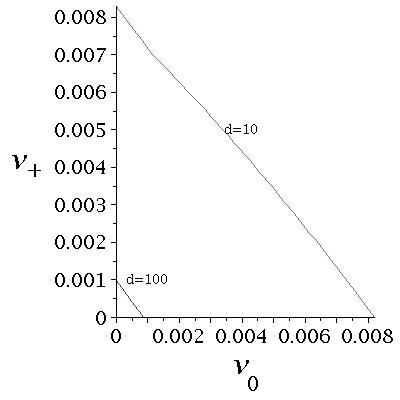}\\
 \caption{$\mathcal{R}_0[\bv]=1$ with $\bv=(0,v_0,v_+)$ as function of $v_0$ ($x$ axis) and $v_+$ ($y$ axis), for different values of $d$. The stable disease free region is above the corresponding line. We consider in all examples $\gamma=52$, $\mu=1/80$. In the left pictures we show the behaviour for small diffusion (the thick line indicates the limit $d\to0$), while in the right pictures, we show the behaviour for large diffusion; in this case, the thick line indicates the limit $d\to\infty$. Above we divide the domain in $\hat\lambda\bydef(\lambda_-,\lambda_0,\lambda_+)=\left(\frac{1}{2},\frac{1}{10},\frac{2}{5}\right)$ and $\hat\beta\bydef(\beta_-,\beta_0,\beta_+)=(30,75,400)$, implying in $\hat{\mathcal{R}}_0\bydef(\mathcal{R}_0^-,\mathcal{R}_0^0,\mathcal{R}_0^+)\approx(0.10,0.14,3.08)$, in the absence of vaccination and diffusion. In particular, in the absence of vaccination $\mathcal{R}_0|_{d=0}=3.08$ and $\mathcal{R}_0 |_{d\to\infty}=1.28$. The lines indicates diffusion equal to 1, 0.5, 0.25, 0.1, 0.05, 0.02 (left) and 10, 100, 1000, 10000 (right). Below: $\hat\lambda=\left(\frac{1}{3},\frac{1}{3},\frac{1}{3}\right)$, $\hat\beta=(30,150,200)$, implying in $\hat{\mathcal{R}}_0\approx(0.19,0.96,1.28)$. It is clear, in this case, that the required vaccination to extinguish the disease will be much smaller. In fact, for $\bv=0$, $\lim_{d\to\infty}\mathcal{R}_0<1$~and therefore the increased diffusion will be enough to make the disease free state stable, even in the absence of vaccination. We plot the critical line $\mathcal{R}_0=1$ for $d$ equals to 0.1, 0.02, 0.01, 0.002 (left) and 10, 100 (right).} 
\label{fig:R0equal1}
 \end{figure}

\section{Discussion}
\label{sec:discussion}

This work is concerned with the study of epidemiological dynamics in metapopulation, in which the environment is naturally divided in three regions: in the first and  second regions, there is no sustained disease transmission. However the existence of migrations between the second region and the third one (with, say, poor sanitary conditions) may bring sporadic cases of the disease to the second and, as  consequence, the migration between the first and the second may still bring rare cases to first region. Therefore, assuming that the population in the first region is not willing to be vaccinated, we study different scenarios, involving migration between sites and  vaccinations in the second and third regions such that outbreaks can be prevented, bringing the basic reproductive ratio $\mathcal{R}_0$ below 1.

In a distinct interpretation of the same model, the second region can work as a ``buffer zone''. We study critical vaccination coverages in the the second and third regions, that are able to prevent disease spread, without vaccinating the population in the first region.

There are some clear limitations in our model. The first one is that a large number of migrants (specially in the limit $d\to\infty$) can change the sanitary conditions of the destination, therefore changing the value of the transmission rates $\beta_\mOp$, that we assumed constant. Furthermore, in all analytical and numerical studied we assume the same migration rate between both regions and that the migration rate does not depend on the disease status. 

With minor modifications, all analysis presented here could be changed to consider $d_+\ne d_-$. However, in section~\ref{sec:control}, we opt for $d_+=d_-$, because this made both the analytical and the numerical analysis simpler.

Even though our model does not include any dynamics -- rational or not -- for the human behaviour, we hope it can provide some insights on the importance of including this kind of structure in epidemiology systems.
Finally, we did not study vaccination costs. This, however, can be made without difficulty, minimizing the vaccination cost in the region above the curve $\mathcal{R}_0=1$, given $d$, in figure~\ref{fig:R0equal1}.

%
%% For one-column wide figures use
%\begin{figure}
%% Use the relevant command to insert your figure file.
%% For example, with the graphicx package use
%%  \includegraphics{example.eps}
%% figure caption is below the figure
%\caption{Please write your figure caption here}
%\label{fig:1}       % Give a unique label
%\end{figure}
%%
%% For two-column wide figures use
%\begin{figure*}
%% Use the relevant command to insert your figure file.
%% For example, with the graphicx package use
%%  \includegraphics[width=0.75\textwidth]{example.eps}
%% figure caption is below the figure
%\caption{Please write your figure caption here}
%\label{fig:2}       % Give a unique label
%\end{figure*}
%%
%% For tables use
%\begin{table}
%% table caption is above the table
%\caption{Please write your table caption here}
%\label{tab:1}       % Give a unique label
%% For LaTeX tables use
%\begin{tabular}{lll}
%\hline\noalign{\smallskip}
%first & second & third  \\
%\noalign{\smallskip}\hline\noalign{\smallskip}
%number & number & number \\
%number & number & number \\
%\noalign{\smallskip}\hline
%\end{tabular}
%\end{table}

\section*{Acknowledgements}
%If you'd like to thank anyone, place your comments here
%and remove the percent signs.
This work was partially supported by FCT/Portugal project  EXPL/MAT-CAL/0794/2013, by Strategic Project UID/MAT/00297/2013 (Centro de Matem\'atica e Aplica\c c\~oes, Universidade Nova de Lisboa) and PhD scholarship PD/BD/128299/2017 (T. Costa, F. Champalimaud). We also thank Max Souza (UFF, Brazil) for several suggestions.

% BibTeX users please use one of
%\bibliographystyle{spbasic}      % basic style, author-year citations
% \bibliographystyle{spmpsci}      % mathematics and physical sciences
%\bibliographystyle{spphys} % APS-like style for physics
%\bibliographystyle{unsrt}
%\bibliography{ringbib}   % name your BibTeX data base

%\end{document}
% Non-BibTeX users please use
%\begin{thebibliography}{99}
%
% and use \bibitem to create references. Consult the Instructions
% for authors for reference list style.

%
%\bibitem{RefJ}
% Format for Journal Reference
%Author, Article title, Journal, Volume, page numbers (year)
% Format for books
%\bibitem{RefB}
%Author, Book title, page numbers. Publisher, place (year)
% etc

\end{document}